\newcommand*{\Scale}[2][4]{\scalebox{#1}{\ensuremath{#2}}}%
\theoremstyle{definition}
\newtheorem{definition}{Definition}[section]
\newtheorem{theorem}{Theorem}[section]
\begin{document}

\newgeometry{bottom=1.5in}

\begin{center}

  \title{\textsc{The Long, the Short and the Random}}
  \maketitle 

  \thispagestyle{empty}
  
  \vspace*{.2in}

  \begin{tabular}{cc}
    Giorgio Camerani\upstairs{\affilone}
   \\[0.25ex]
   {\small Rome, Italy - 29 September 2020}\\
  \end{tabular}
  \date{\today}
  \emails{
    \upstairs{\affilone}giorgio.camerani@gmail.com 
    }
  \vspace*{0.4in}

\textcolor[RGB]{50,50,50}{\rule{290pt}{0.1pt}}
\begin{abstract}
We furnish solid evidence, both theoretical and empirical, towards the existence of a deterministic algorithm for random sparse $\#\Omega(\log n)$-SAT instances, which computes the exact counting of satisfying assignments in sub-exponential time. The algorithm uses a nice combinatorial property that every CNF formula has, which relates its number of unsatisfying assignments to the space of its monotone sub-formulae.
\end{abstract}
\end{center}

\vspace*{0.15in}
\hspace{18.5pt}
  \Small	
  \textbf{\textit{Keywords: }} {\#SAT, counting, sub-exponential time, logarithmic clause, short certificate}


\vspace*{0.5in}
\section{Introduction}
\label{sec:intro}
\noindent Let $\Phi = \Phi( n, \delta, \lambda, \Lambda_{\downarrow}, \Lambda_{\uparrow} )$ be a $k$-CNF formula on $n$ boolean variables with $\delta n$ clauses, each one having length $k$ such that $\Lambda_{\downarrow} \leq k \leq \Lambda_{\uparrow}$ and being chosen uniformly at random among the $\sum_{k = \Lambda_{\downarrow}}^{\Lambda_{\uparrow}} 2^k {n \choose k}$ possible candidates, where $\delta \geq 1$ and $\lambda \geq 1$ are constants, where $\Lambda_{\downarrow}(n)$ is any function such that $\Lambda_{\downarrow}(n) \geq \lambda \log n$, and where $\Lambda_{\uparrow}(n)$ is any function such that $\Lambda_{\downarrow}(n) \leq \Lambda_{\uparrow}(n) \leq n$. In this paper, we present a deterministic sub-exponential time algorithm for computing the exact number of satisfying assignments of any such $\Phi$.
\\\\
Our algorithm, which is very simple in its essence, takes profit from a nice combinatorial property which holds for any CNF formula, no matter if sparse or dense, no matter how long its clauses are allowed to be nor whether their length is the same for all rather than different, and no matter if random or not. Just as general a property as possible. It is precisely the random nature of $\Phi$ however, together with the at least logarithmic length of all its clauses, which combines with such property in such a way that renders us able to obtain sub-exponential running time. If $\Phi$ was not random, that property would not have brought any tangible benefit, leaving us with an exponential time algorithm only. While if $\Phi$ was random, but with constant clause length, again we would only have had an exponential time algorithm, albeit with a decreasing exponent as $k$ increased.
\\\\
Yes, this. As surprising as it may seem, and as contrary to conventional wisdom as it can be, believe it or not, that combinatorial property, which relates the number of unsatisfying assignments a CNF formula has to the landscape of its monotone sub-formulae, does, in the case of random formulae, lead us to the following inevitable conclusion: the longer the clause length, the shorter the running time; the longer the clause length, the shorter the counting certificate.
\newpage
\subsection{Contents}The rest of this paper is organized as follows:
\\\\
\begin{tabular}{p{2cm}p{14cm}}
\textbf{Section 2} & Illustrates the combinatorial property our algorithm relies upon. \\
 & \\
\textbf{Section 3} & Describes the algorithm and the intuition behind it. \\
 & \\
\textbf{Section 4} & Proves a sub-exponential upper bound on its running time. \\
 & \\
\textbf{Section 5} & Presents empirical evidence in further support of the theoretical proof. \\
 & \\
\textbf{Section 6} & Ends the document by identifying what seem to be promising avenues of future work. \\
\end{tabular}
\\
\section{The property: from unsatisfying assignments to monotone sub-formulae}\label{sec:property}
\noindent We begin by introducing some classic notation and definitions, then we will proceed straight to the illustration of our combinatorial property, which allows us to express the number of unsatisfying assignments as a function of the landscape of monotone sub-formulae.

\subsection{Notations and definitions} Let $\Phi = \{c_1, \cdots, c_m\}$ be a \textit{generic} CNF formula\footnote{In this section \ref{sec:property}, and only here, $\Phi$ will be generic. In all the rest of the paper, $\Phi$ will be what we have defined in section \ref{sec:intro}.} with $n$ variables and $m$ clauses. Let $V = \{v_1, \cdots, v_n\}$ be the set of variables\footnote{Without loss of generality, we will assume that $V$ is induced by the clauses in $\Phi$. Formally, $\forall v \in V \ \exists c \in \Phi : v \in c \lor \lnot v \in c$. In other words, there cannot be variables which are not mentioned anywhere in $\Phi$: its variables are all and only those that appear in its clauses.} of $\Phi$. Each clause $c_i = \{\ell_{i,1}, \cdots, \ell_{i,|c_i|}\}$ is a set of literals, where each literal is either a variable $v \in V$ or its negation. The cardinality $|c_i|$ of a clause $c_i$ is also known as its \textit{length}. Let $\mathcal{A} = \{v_1, \lnot v_1\} \times \cdots \times \{v_n, \lnot v_n\}$ denote the set of all the $2^n$ possible boolean assignments to the $n$ variables in $V$. Let $\mathcal{S} = \{ b \in \mathcal{A} : \forall c \in \Phi \ c \cap b \neq \varnothing \}$ be the set of satisfying assignments of $\Phi$. Let $\mathcal{U} = \mathcal{A} \setminus \mathcal{S} = \{ b \in \mathcal{A} : \exists c \in \Phi \ c \cap b = \varnothing \}$ be the set of unsatisfying assignments of $\Phi$.

\theoremstyle{definition}
\begin{definition}[\textit{Sub-formula} of $\Phi$]
A \textit{sub-formula} $\Psi$ of $\Phi$ is any formula $\Psi \subseteq \Phi$.
\end{definition}

\theoremstyle{definition}
\begin{definition}[\textit{Monotone} formula]
A formula is \textit{monotone} if and only if each of its variables always appears with the same sign: either always positive or always negated. 
\end{definition}

\noindent See how for a formula to be monotone it is not required that all the variables carry the same sign. Different variables can have different signs. The only restriction is that every same variable always carries the same sign. This paragraph culminates by printing the following truism:
\begin{equation}\label{eq:truism}
|\mathcal{S}| = 2^n - |\mathcal{U}|
\end{equation}

\subsection{The property} Let $O_\nu$ be the number of monotone sub-formulae of $\Phi$ having $\nu$ variables and an odd number of clauses. Let $E_\nu$ be the number of monotone sub-formulae of $\Phi$ having $\nu$ variables and an even number of clauses. Our property consists in the following: 
\begin{theorem}
\begin{equation}\label{eq:property}
|\mathcal{S}| = 2^n - \sum_{\nu = 1}^{n} ( O_\nu - E_\nu ) \cdot 2^{n-\nu}    
\end{equation}
\end{theorem}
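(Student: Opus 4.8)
The plan is to prove the equivalent identity $|\mathcal{U}| = \sum_{\nu = 1}^{n} ( O_\nu - E_\nu ) \cdot 2^{n-\nu}$, from which \eqref{eq:property} follows at once via the truism \eqref{eq:truism}. The natural tool is inclusion--exclusion applied to the cover $\mathcal{U} = \bigcup_{c \in \Phi} U_c$, where $U_c = \{ b \in \mathcal{A} : c \cap b = \varnothing \}$ is the set of assignments falsifying the single clause $c$. For a sub-formula $\Psi \subseteq \Phi$ write $U_\Psi = \bigcap_{c \in \Psi} U_c$ for the set of assignments falsifying \emph{every} clause of $\Psi$ simultaneously. Inclusion--exclusion then gives $|\mathcal{U}| = \sum_{\varnothing \neq \Psi \subseteq \Phi} (-1)^{|\Psi|+1} \, |U_\Psi|$.

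The crux is to evaluate $|U_\Psi|$. Falsifying a clause $c$ forces every variable occurring in $c$ to the unique value opposite to the sign with which it occurs there, and leaves all other variables unconstrained. Hence $U_\Psi$ is non-empty if and only if these forced values are mutually consistent across all clauses of $\Psi$, i.e.\ if and only if no variable is required to be simultaneously positive and negated --- which is exactly the condition that $\Psi$ be monotone. When $\Psi$ is monotone with exactly $\nu$ distinct variables, the forcing pins down those $\nu$ variables and leaves $n - \nu$ free, so $|U_\Psi| = 2^{n-\nu}$; when $\Psi$ is not monotone, $|U_\Psi| = 0$ and the term vanishes from the sum.

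It then remains to reorganize the surviving sum. Restricting to monotone $\Psi$ and grouping by the number $\nu$ of distinct variables they contain, $|\mathcal{U}| = \sum_{\nu = 1}^{n} 2^{n-\nu} \sum_{\Psi} (-1)^{|\Psi|+1}$, where the inner sum ranges over monotone sub-formulae on exactly $\nu$ variables. Since $(-1)^{|\Psi|+1}$ equals $+1$ when $|\Psi|$ is odd and $-1$ when $|\Psi|$ is even, that inner sum is precisely $O_\nu - E_\nu$; the outer index starts at $1$ because any sub-formula on $\nu \geq 1$ variables necessarily contains at least one clause, so non-emptiness of $\Psi$ is automatic and the $\nu = 0$ (empty) sub-formula never enters. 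Substituting and invoking \eqref{eq:truism} yields \eqref{eq:property}.

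I expect the one genuinely delicate point to be the characterization of when $U_\Psi \neq \varnothing$: one must argue carefully that the sole obstruction to a common falsifying assignment is a sign conflict on a shared variable, and that in the absence of such a conflict the number of free variables is $n$ minus the size of the variable \emph{set} of $\Psi$ --- not $n - \sum_{c \in \Psi} |c|$, since variables may be shared among several clauses of $\Psi$. Everything else is routine inclusion--exclusion bookkeeping together with a parity count.
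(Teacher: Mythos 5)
Your proposal is correct and follows essentially the same route as the paper: inclusion--exclusion over the sets $\mathcal{U}_c$ of assignments falsifying each clause, the observation that the intersection over a sub-formula $\Psi$ is empty unless $\Psi$ is monotone (a sign conflict on a shared variable being the only obstruction) and has size $2^{n-\nu}$ otherwise, and a regrouping by the number $\nu$ of distinct variables with the parity of $|\Psi|$ supplying the sign $O_\nu - E_\nu$. The ``delicate point'' you flag is handled in the paper by the equivalent pairwise criterion that two conflicting clauses already have disjoint falsifying sets; the two arguments are interchangeable.
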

\begin{proof}
For each clause $c = \{\ell_{1}, \cdots, \ell_{|c|}\} \in \Phi$, we define the set $\mathcal{U}_c$ as follows:
\[
\mathcal{U}_c = \{ b \in \mathcal{A} : c \cap b = \varnothing \}
\]
$\mathcal{U}_c$ contains all those boolean assignments that cannot be satisfying for $\Phi$, due to the presence of  $c$ itself: observe how every clause $c$, just because it exists, has the effect of putting outside $\mathcal{S}$, and thus in $\mathcal{U}$, all those $b \in \mathcal{U}_c$. And which are those $b$ exactly? Well, it is easy to see that they are all and only those $2^{n - |c|}$ boolean assignments having all the literals $\ell_{1}, \cdots, \ell_{|c|}$ set to false. The set $\mathcal{U}$ can then be expressed as:
\[
\mathcal{U} = \bigcup_{i = 1}^{m} \mathcal{U}_{c_i}
\]

\theoremstyle{definition}
\begin{definition}[\textit{Conflicting} clauses]
Any $2$ clauses $c_1, c_2 \in \Phi$ are said to be \textit{conflicting} if and only if $\exists v \in V : v \in c_1 \land \lnot v \in c_2$. 
\end{definition}

\noindent It is easy to see that for any $2$ clauses $c_1, c_2 \in \Phi$, it is the case that $\mathcal{U}_{c_1} \cap \mathcal{U}_{c_2} \neq \varnothing$ if and only if $c_1$ and $c_2$ are not conflicting. For if they are, any $b \in \mathcal{U}_{c_1}$ will satisfy $c_2$ thus will not be in $\mathcal{U}_{c_2}$, and vice-versa\footnote{That is to say, it is impossible to falsify the one without satisfying the other.}, hence $\mathcal{U}_{c_1}$ and $\mathcal{U}_{c_2}$ would have empty intersection. Whereas if they are not, $\mathcal{U}_{c_1} \cap \mathcal{U}_{c_2}$ would contain all and only those $2^{n-|c_1 \cup c_2|}$ boolean assignments falsifying both $c_1$ and $c_2$, that is to say having all literals of $c_1$ set to false and all literals of $c_2$ set to false. Now, as we have to determine the cardinality of $\mathcal{U}$, which is a set defined as the union of possibly non-disjoint sets, we invoke the inclusion-exclusion principle:

\begin{equation}
\label{eq:inclusionexclusion}
|\mathcal{U}| = \biggl|\bigcup_{i = 1}^{m} \mathcal{U}_{c_i}\biggr| = \sum_{i = 1}^{m} (-1)^{i+1} \cdot \left( \sum_{1 \leq j_1 < \cdots < j_i \leq m} | \mathcal{U}_{c_{j_1}} \cap \cdots \cap \mathcal{U}_{c_{j_i}} | \right)
\end{equation}
\noindent Observe how the inner summation scans the \textit{monotone} sub-formulae of $\Phi$ having $i$ clauses. Why monotone? Because if the sub-formula $\Psi$ of $\Phi$ obtained by picking clauses $c_{j_1}, \cdots, c_{j_i}$ is not monotone, then there must be $c_1, c_2 \in \Psi$ such that $c_1$ and $c_2$ are conflicting. But if they are, then $\mathcal{U}_{c_1} \cap \mathcal{U}_{c_2} = \varnothing$, and the entire term $| \mathcal{U}_{c_{j_1}} \cap \cdots \cap \mathcal{U}_{c_{j_i}} |$ evaluates to $0$. This means that non-monotone sub-formulae play no role here: they bring no contribution to the inner summation. Only monotone sub-formulae do matter. On the other hand, the outer summation scans the possible numbers of clauses a sub-formula may have, from just $1$ to $m$: note how those sub-formulae having an odd number of clauses give an additive contribution, whereas those having an even number of clauses give a subtractive contribution.
\\\\
Let us focus on the generic term of the inner summation: the cardinality of the set $\mathcal{U}_{c_{j_1}} \cap \cdots \cap \mathcal{U}_{c_{j_i}}$. By applying the definition of $\mathcal{U}_c$, we can write:
\begin{equation*}
\mathcal{U}_{c_{j_1}} \cap \cdots \cap \mathcal{U}_{c_{j_i}} = \{ b \in \mathcal{A} : c_{j_1} \cap b = \varnothing \land \cdots \land c_{j_i} \cap b = \varnothing \}
\end{equation*}
It is now easy to see that the set $\mathcal{U}_{c_{j_1}} \cap \cdots \cap \mathcal{U}_{c_{j_i}}$ contains all and only those boolean assignments that falsify all the clauses of the monotone sub-formula $\Psi$ obtained by picking clauses $c_{j_1} \cdots c_{j_i}$ from $\Phi$. Therefore its cardinality is expressed by the following equation:
\begin{equation*}
|\mathcal{U}_{c_{j_1}} \cap \cdots \cap \mathcal{U}_{c_{j_i}}| = 2^{n-\nu}
\end{equation*}
where $\nu$ is the number of variables of $\Psi$. Hence the contribution given by the generic term of the inner summation does not depend on which nor on how many clauses are picked, but only on the number of variables of the monotone sub-formula $\Psi$ induced by them. How many clauses are picked is only relevant in the outer summation, to determine if such contribution has to be added or subtracted.
\\\\
Sub-formulae with the same number of variables give exactly the same contribution, regardless of their number of clauses which only affects the sign of such contribution. By grouping sub-formulae according to their number of variables rather than according to their number of clauses, we are thus able to rearrange equation \ref{eq:inclusionexclusion} as follows:

\begin{equation}
\label{eq:rearrangement}
|\mathcal{U}| = \sum_{\nu = 1}^{n} O_\nu \cdot 2^{n-\nu} - \sum_{\nu = 1}^{n} E_\nu \cdot 2^{n-\nu} = \sum_{\nu = 1}^{n} ( O_\nu - E_\nu ) \cdot 2^{n-\nu}
\end{equation}
The proof ends by observing that substituting equation \ref{eq:rearrangement} into equation \ref{eq:truism} leads to equation \ref{eq:property}.
\end{proof}

\section{The algorithm and the intuition that led to it}\label{sec:intuition}
\noindent From this moment onwards, we will take the assumption, without loss of generality, that all the clauses of our input formula $\Phi$ have length exactly\footnote{Each time we write $\lambda \log n$ we actually mean $\left \lceil{\lambda \log n}\right \rceil$. The same holds for $\delta n$ and $\left \lceil{\delta n}\right \rceil$ of course. Such alleviation of notation is meant to not encumber the eyes of the reader, nor the hands of the writer.} $\lambda \log n$, that is to say $\Lambda_{\downarrow} = \Lambda_{\uparrow} = \lambda \log n$. Such assumption is not going to hamper the validity of our reasoning, though, because it will only worsen the running time of the algorithm: as we will see in section \ref{sec:proof}, once sub-exponential running time is proven for a certain clause length, it can only get better when the length is augmented. Let us recall the \textit{leitmotiv} of the whole proof:
\\
\begin{mdframed}[backgroundcolor=gray!7] 
\vspace*{\fill} 
\begin{quote} 
\centering 
\textsc{\Small Longer means Shorter} 
\end{quote}
\vspace*{\fill}
\end{mdframed}
\vspace{10pt}
\noindent 
which in turn implies that what we will prove under the $\Lambda_{\downarrow} = \Lambda_{\uparrow} = \lambda \log n$ assumption will be \textit{a fortiori ratione} valid in the more general case where $\Lambda_{\downarrow} \geq \lambda \log n$.
\\\\
Without further ado, let's start. The intuition behind the algorithm is epitomized by the following question:
\\
\begin{mdframed}[backgroundcolor=gray!7] 
\vspace*{\fill} 
\begin{quote} 
\centering 
\textit{What if the set of \textbf{all} monotone sub-formulae of $\Phi$ is sub-exponentially sized?} 
\end{quote}
\vspace*{\fill}
\end{mdframed}
\vspace{10pt}
\noindent 
If this is indeed the case, we might just enumerate such whole set, kind of plugging it into equation \ref{eq:rearrangement}. Doing so, we would count the unsatisfying assignments of $\Phi$ in sub-exponential time. For $\nu = \lambda \log n, \cdots, n$ let $\langle \nu, O_\nu, E_\nu \rangle$ be a sequence of triples, each one indicating that there are $O_\nu$ (respectively $E_\nu$) monotone sub-formulae of $\Phi$ having $\nu$ variables and an odd (respectively even) number of clauses. We would just scan the entire set of monotone sub-formulae: each time we find one having $\nu$ variables, we update the corresponding triple by increasing by $1$ either $O_\nu$ or $E_\nu$, depending on the parity of its number of clauses. Once we have finished to scrutinize the whole space of monotone sub-formulae, we would have that sequence of $n - \lambda \log n + 1$ triples ready to be plugged into the summation of equation \ref{eq:property}.
\\\\
But what does it mean for that set to be sub-exponentially sized? Intuitively, it means that there exists a (sufficiently small) maximum number of clauses, let us call it $i_{STOP}$, that any monotone sub-formula of $\Phi$ can possibly have. Above such number, nothing else would exist to be perlustrated: there would be no monotone sub-formula having more than $i_{STOP}$ clauses. Asserting that such set is sub-exponentially sized is then equivalent to assert that $i_{STOP} \in o(n)$. So, our algorithm would simply enumerate all monotone sub-formulae having up to $i_{STOP}$ clauses, knowing that no further others can exist above such limit:
\\
\begin{mdframed}[backgroundcolor=gray!7] 
\vspace*{\fill} 
\begin{quote} 
\centering 
\textit{Just brute force, yes. But applied to a sub-exponentially sized space.} 
\end{quote}
\vspace*{\fill}
\end{mdframed}
\vspace{10pt}
\newpage \noindent
Here is the pseudocode of our algorithm, lines from $3$ to $9$ correspond to the brute force enumeration of all monotone sub-formulae of $\Phi$ having $i \leq i_{STOP}$ clauses (that is, all of them), whereas lines from $10$ to $14$ correspond to equation \ref{eq:property}:
\renewcommand{\thealgorithm}{}
\begin{algorithm}
\caption{Computes the number of satisfying assignments of $\Phi$}
\label{alg:count}
\begin{algorithmic}[1]
\Procedure{Count}{$\Phi$}
    \State Initialize $\langle \nu, O_{\nu}, E_{\nu} \rangle \gets \langle \nu, 0, 0 \rangle$, $\forall \nu \in [\ \lambda \log n,\ n\ ]$
        \For {each monotone sub-formula of $\Phi$ having $i \leq i_{STOP}$ clauses and $\nu$ variables}
            \If {$i$ is odd}
                \State $\langle \nu, O_{\nu}, E_{\nu} \rangle \gets \langle \nu, O_{\nu} + 1, E_{\nu} \rangle$
            \Else 
                \State $\langle \nu, O_{\nu}, E_{\nu} \rangle \gets \langle \nu, O_{\nu}, E_{\nu} + 1 \rangle$
            \EndIf
        \EndFor
	\State $count \gets 0$
	\For{$\nu \in [\ \lambda \log n,\ n\ ]$}
	    \State $count \gets count + ( O_{\nu} - E_{\nu} ) \cdot 2^{n-\nu}$
	\EndFor
	\State Return $2^n-count$
\EndProcedure
\end{algorithmic}
\end{algorithm}

\noindent
Let us formulate one quick inspirational thought:\\
\begin{mdframed}[backgroundcolor=gray!7] 
\vspace*{\fill} 
\begin{quote} 
\centering 
\textit{We are counting without searching} 
\end{quote}
\vspace*{\fill}
\end{mdframed}
\vspace{10pt}
We are not even trying to search for satisfying assignments: no DPLL backtracking search, no clause learning, no symmetry breaking, no walkings, no simulated annealing, no random restarts, nothing of all of this. We are not even looking at the space of $2^n$ possible boolean assignments, let alone walking inside it. We simply completely ignore the exponentially sized space of satisfying assignments, and infer how many are there by judging from the inspection of the landscape of sub-exponentially many monotone sub-formulae. We take a completely different route, in a new paradigm shift which makes us able to shortcut it all, avoiding the burden of wandering within that hostile, exponentially wide territory. Just one further reflection, about such $2$ spaces: not only their size is fundamentally different, also their nature is. While it is hard to even find a single $1$ satisfying assignment, let alone count them by jumping from one cluster to the next, the picture of monotone sub-formulae is a completely different story: it is effortless to find them, they are just there awaiting to be enumerated.
\\\\\noindent
Back to intuition again. Let us now ask the following question: why should we, when $\Phi$ is random and its clauses have at least logarithmic length, believe that the space of all its monotone sub-formulae is going to be sub-exponentially sized?

\theoremstyle{definition}
\begin{definition}[\textit{Maximal} monotone sub-formula]
A monotone sub-formula $\Psi$ of $\Phi$ is \textit{maximal} if and only if adding any further clause $c \in \Phi$ to it would render it no longer monotone. 
\end{definition}

\noindent Imagine to randomly pick a maximal monotone sub-formula $\Psi$ of $\Phi$. How? What does it mean to randomly pick it? It means to start with an empty sub-formula $\Psi$, and to randomly pick clauses from $\Phi$, one after the other: each time a clause is picked out, it gets removed from $\Phi$ and, if possible, added to $\Psi$. What does "if possible" mean? It means that the picked clause gets added to $\Psi$ if and only if it does not destroy its monotonicity: if $\Psi$ would remain monotone, then such clause gets added to it, otherwise it is overthrown. If our intuition is not wrong, then the fact that each clause has at least logarithmic length should imply that we will quickly saturate a large fraction of the $n$ available variables. What do we mean by "quickly"? We mean much quicker than what would happen if the clause length was constant. And what do we mean by "large fraction"? We mean at least $\frac{n}{2}$. And what do we mean by "saturate"? Think about this: each time a clause $c$ is added to $\Psi$, the \textit{new} variables that $c$ brings to $\Psi$ get immediately frozen, crystalised: their sign is once and forever established, due to the monotonicity of $\Psi$; therefore each future clause which wants to be enrolled in $\Psi$ shall agree with all the literals of $c$, as well as with all the literals of all the clauses enrolled before $c$. Intuitively, this should imply that, as more and more clauses are added to $\Psi$, the probability that a randomly chosen candidate future clause is compatible with all the so far crystalised variables, which number is growing fast, gets smaller and smaller: this in turn means that the number of clauses of $\Phi$ that we are forced to waste before we find the next one compatible with the crystalised variables gets higher and higher. Until we quickly exhaust all the $\delta n$ clauses originally present in $\Phi$. At such point, this process ends, $\Psi$ will be a randomly picked maximal monotone sub-formula of $\Phi$, and the size $i$ of $\Psi$ (i.e. its number of clauses) will be at most $i_{STOP}$. 
\begin{figure}[H]
    \centering
    \includegraphics[scale=0.15]{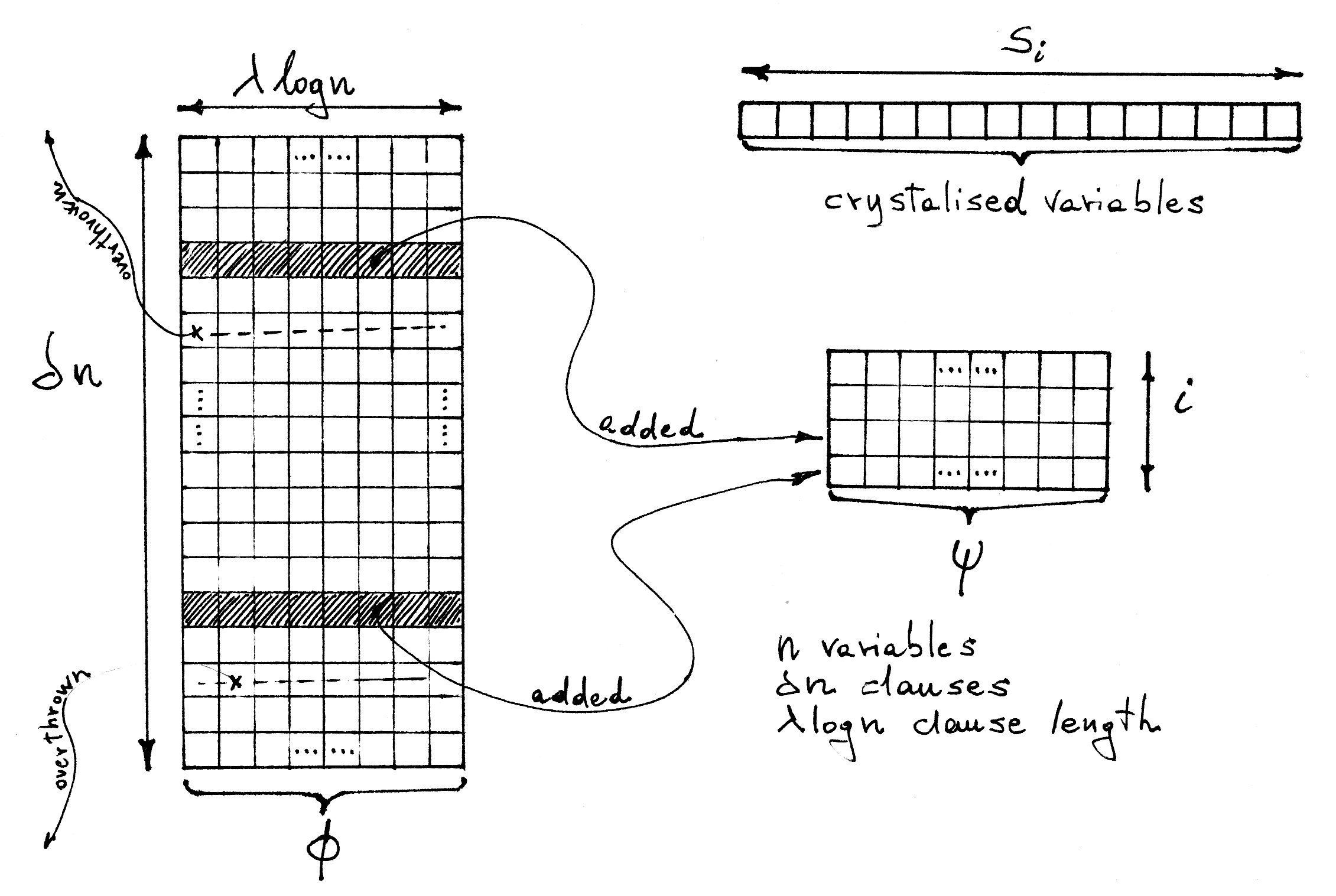} \\
    \caption{Process of randomly picking a maximal monotone sub-formula $\Psi$ of $\Phi$}
    \label{fig:processo_psi}
\end{figure}
\noindent Let us summarize it: the faster the set of crystalised variables grows in size $\rightarrow$ the quicker the probability of picking a compatible clause approaches zero $\rightarrow$ the sooner we will waste all the original clauses at our disposal. If the wasting of clauses happens fast enough, we will run out of them soon enough, when the size of $\Psi$ is still $o(n)$. Epitomizing such intuition:\\   
\begin{mdframed}[backgroundcolor=gray!7] 
\vspace*{\fill} 
\begin{quote} 
\centering 
\textit{We shall dilapidate all the clauses of $\Phi$ quick enough for $i_{STOP}$ to be still sub-linear} 
\end{quote}
\vspace*{\fill}
\end{mdframed}
\vspace{10pt}
See how all of the above narration could not be possible if the clause length was constant: for the size of the set of crystalised variables would have grown too slow for $i_{STOP}$ to remain tamed down at $o(n)$. Also, \textit{ça va sans dire}, no way the trick would have worked if $\Phi$ was not random: reason is, as it will become more clear within the proof in section \ref{sec:proof}, that we need clauses having roughly half of their literals positive and half negative, a set of crystalised variables exhibiting kind of the same rough balancing in their signs, and variables in $\Phi$ occurring about the same number of times, around half of the time positive and half negated. These are all properties of random formulae. As an obvious counterexample, imagine if $\Phi$ was itself monotone: $i_{STOP}$ would have been equal to $\delta n$.
\newpage
\subsection{Formalizing the intuition} We are now going to formalize the process we have just described, which is illustrated in Figure \ref{fig:processo_psi}. To do so, let us first define the following functions:

\begin{center}
\renewcommand{\arraystretch}{2}
\begin{tabular}{ m{0.5cm}| m{12cm} }
 $s_i$ & The expected value of the number of variables of $\Psi$ (i.e. what we have called the crystalised variables) when $\Psi$ already contains $i$ clauses. \\ 
 \hline
 $p_i$ & The probability that a clause $c$ randomly picked from $\Phi$ is \textit{compatible} with $\Psi$ when $\Psi$ already contains $i$ clauses. In other words, it is the probability to insert the $i+1$ clause into $\Psi$. Recall that $c$ being compatible with $\Psi$ means that adding $c$ to $\Psi$ does not destroy $\Psi$'s monotonicity. Here we are basically asking: knowing that $\Psi$ contains $s_i$ variables, which is the probability that the $\lambda \log n$ literals of the candidate clause $c$ do \textit{all} agree with the signs of such $s_i$ variables? \\  
\hline
 $w_i$ & The expected value of the number of clauses of $\Phi$ that were necessary for us to dilapidate so far in order to have $i$ of them into $\Psi$.    
\end{tabular}
\end{center}
Let us first concentrate on $s_i$. We start by observing the following $2$ obvious facts:
\begin{itemize}
   \item  When $\Psi$ contains $0$ clauses, it also contains $0$ variables, so $s_0 = 0$.
   \item  When $\Psi$ contains $1$ clause, it certainly contains exactly $\lambda \log n$ variables, so $s_1 = \lambda \log n$.
 \end{itemize}
What about a generic $i \geq 2$? Imagine to find yourself one moment before adding the $i$-th clause to $\Psi$: in such moment $\Psi$ would have $i-1$ clauses and $s_{i-1}$ variables. Here we have to imagine that our randomly picked candidate $i$-th clause $c$ has already passed the check: it is compatible with the crystalised variables. We are just one instant before adding it to $\Psi$. Now we focus on this question: how many literals $c$ has in common with $\Psi$? Clearly, the number $j$ of such common literals might span from a minimum of $0$ to a maximum of $\lambda \log n$:
\begin{figure}[H]
    \centering
    \includegraphics[scale=0.15]{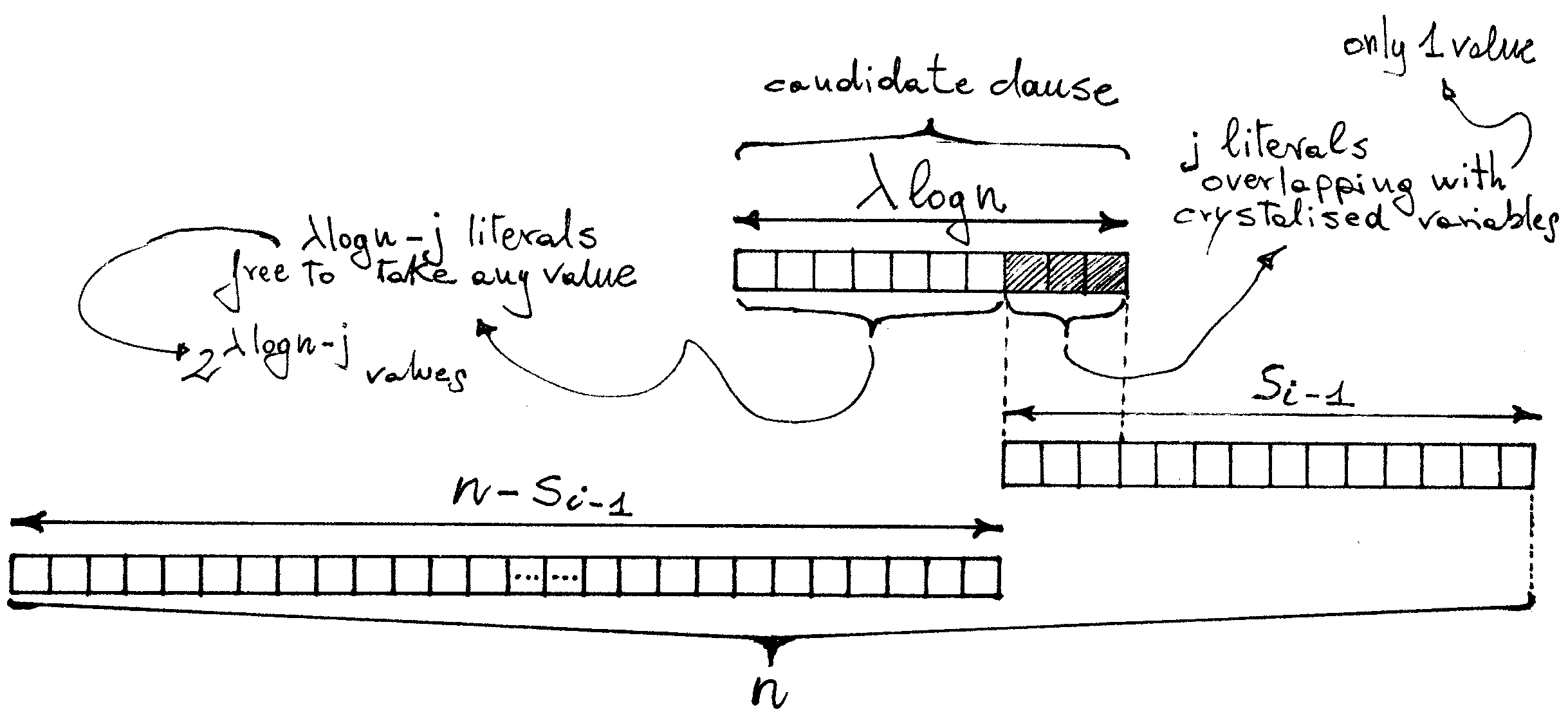} \\
    \caption{$c$ might have $j \in [\ 0,\ \lambda \log n\ ]$ literals overlapping with $\Psi$}
    \label{fig:s_i}
\end{figure}
\noindent Each value $j$ has a certain probability $\pi_j$ of happening. The expected value of $s_i$, i.e. the expected number of variables $\Psi$ will have after having inserted $c$ in it, is then given by the following expression:
\begin{equation*}
s_i = \sum_{j = 0}^{\lambda \log n} ( s_{i-1} + \lambda \log n - j ) \cdot \pi_{j}
\end{equation*}
We now devote our attention to computing the probability $\pi_j$. We will compute it as the fraction of favourable outcomes over the total number of possible outcomes. Clearly, there are $s_{i-1} \choose j$ ways of choosing such $j$ literals among the $s_{i-1}$ many crystalised ones, and there are $n-s_{i-1} \choose \lambda \log n - j$ ways of choosing the remaining others among the not yet crystalised variables. Those $j$ literals cannot oscillate among their $2^j$ possible combinations of signs: they have only $1$ allowed combination, namely the one exhibited by $\Psi$. On the contrary, the others $\lambda \log n - j$ are free to take any of the $2^{\lambda \log n - j}$ possible combinations of signs. The number of favourable outcomes can be thus considered to be the following:
\begin{equation}\label{eqn:pi_j_favourable}
2^{\lambda \log n - j} {n - s_{i-1} \choose {\lambda \log n - j}}{s_{i-1} \choose j}
\end{equation}
While the total number of possible outcomes corresponds to the total number of compatible clauses available, whichever amount of overlapping literals they might have with $\Psi$. Such number is just the summation, for all possible values of $j$, of the expression in \ref{eqn:pi_j_favourable}. Here it is:
\begin{equation}\label{eqn:pi_j_total}
\sum_{t = 0}^{\lambda \log n} 2^{\lambda \log n - t}  {n - s_{i-1} \choose {\lambda \log n - t}}{s_{i-1} \choose t}
\end{equation}
One precision about \ref{eqn:pi_j_favourable} and \ref{eqn:pi_j_total}: to be $100\%$ accurate, we would have to perturbate them a little bit. From the expression in $\ref{eqn:pi_j_total}$, we would have to subtract the number of compatible clauses already used so far to build $\Psi$, which is $i-1$. Whereas to the expression in $\ref{eqn:pi_j_favourable}$, we would have to subtract the expected number of already used compatible clauses having $j$ overlappings, which can be at most $i-1$ of course. We deliberately decide to ignore both these tiny adjustments, for the latter one would be very annoying to compute, would unnecessarily complexify the expression, and would gain us very little in terms of the accuracy of our estimation of $s_i$. As we will see in section \ref{sec:empirical}, doing such simplification has no palpable impact on predicting $s_i$. To be convinced of this also analytically, it is sufficient to observe how, as $n \to \infty$, that tiny amounts of at most $i-1 < \delta n$ clauses are swallowed up whole by the involved binomial coefficients. We are now ready to present the final expression for $s_i$, which is the following:  
\begin{equation}
\label{eqn:s_i}
s_i=\begin{dcases}
			0 & \text{if $i = 0$}\\
            \lambda \log n & \text{if $i = 1$}\\
            \sum_{j=0}^{\lambda \log n} ( s_{i-1} + \lambda \log n - j ) \frac{ 2^{\lambda \log n - j} { n-s_{i-1} \choose \lambda \log n - j }{ s_{i-1}\choose j } }{ \sum_{t=0}^{\lambda \log n} 2^{\lambda \log n - t} {n-s_{i-1} \choose \lambda \log n - t}{ s_{i-1} \choose t }} & \text{otherwise}
		 \end{dcases}
\end{equation}
\noindent See how it is a recursive definition, since $\forall i \geq 2$ it is the case that $s_i$ it depends on $s_{i-1}$. This is not surprising. See also how the case $i=1$ is indeed a special case of the general case $i \geq 2$: by plugging the former into the expression of the latter, we get back $\lambda \log n$. 
\\\\Let us now concentrate our attention on the computation of $p_i$. So we are at the point where $\Psi$ contains $s_i$ variables and $i$ clauses, and we would like to know which is the probability that a clause $c$ randomly picked from $\Phi$ is compatible with all the variables of $\Psi$, in order for $c$ to be enrolled in $\Psi$ as its $i+1$-th clause, without infringing its monotonicity. First of all, we observe the obvious fact that $p_0 = 1$: when $\Psi$ contains no clauses, any picked clause is certainly compatible with $\Psi$. For the general case where $i \geq 1$, we will determine $p_i$ as the ratio between favourable outcomes and total outcomes: the former being the number of available compatible clauses, the latter being the number of available clauses. As for the number of compatible clauses, we have already determined it in equation \ref{eqn:pi_j_total} while computing $s_i$, here we simply have to repropose it by taking care of using $s_i$ instead of $s_{i-1}$:
\begin{equation}
\sum_{t = 0}^{\lambda \log n} 2^{\lambda \log n - t}  {n - s_i \choose \lambda \log n - t}{s_i \choose t}
\end{equation}
As for the number of clauses in general, compatibles or not, here it is:
\begin{equation}
2^{\lambda \log n}  {n \choose \lambda \log n}
\end{equation}
From the number of compatible clauses, we may decide to subtract $i$, being it the number of compatibles clauses used so far, and thus no longer available. Analogously, from the number of clauses in general, we may subtract $w_i$, being it the number of original clauses of $\Phi$ wasted so far to arrive at this point where $\Psi$ has $i$ clauses. We let such tiny adjustments remain there, because both their expressions are straightforward now, but keeping in mind that what we said previously stands valid also in their case: as $n \to \infty$, they are steamrolled by the binomials. The final expression for $p_i$ is henceforth the following:
\begin{equation}
\label{eqn:p_i}
p_i=\begin{dcases}
			1 & \text{if $i = 0$}\\
            \frac{-i + \sum_{t=0}^{\lambda \log n} 2^{\lambda \log n - t} {n-s_i \choose \lambda \log n - t}{ s_i \choose t }}{2^{\lambda \log n} {n \choose \lambda \log n} - w_i } & \text{otherwise}
		 \end{dcases}
\end{equation}
\\We are finally ready to devote our attention to $w_i$. Here we are asking: knowing that $\Psi$ has $i$ clauses, how many original clauses of $\Phi$ did we have to dilapidate for us to arrive at such point? We start by observing the truism that $w_0$ = 0, i.e. to have an empty $\Psi$ we have to waste no clauses from $\Phi$. What about the general case $i \geq 1$? Well, there we have to waste $w_{i-1}$ clauses, plus the expected number of clauses of $\Phi$ needed to be overthrown before being able to find a compatible one to be added to $\Psi$ as its $i$-th clause. Such expected number is clearly $\frac{1}{p_{i-1}}$, which leads to the following:  
\begin{equation}
\label{eqn:w_i}
w_i=\begin{dcases}
			0 & \text{if $i = 0$}\\
            w_{i-1} + \frac{1}{p_{i-1}} = \sum_{r=0}^{i-1} \frac{1}{p_r} & \text{otherwise}
		 \end{dcases}
\end{equation}
Now that we have defined and computed $s_i$, $p_i$ and $w_i$, we are ready to end this section by formalizing the definition of (the expectation\footnote{Average number of clauses of any maximal monotone sub-formula $\Psi$ of $\Phi$.} of) $i_{STOP}$ which we informally introduced in the beginning:
\theoremstyle{definition}
\begin{definition}[$i_{STOP}$]
The smallest $i$ which satisfies $w_i \geq \delta n$.
\end{definition}
\noindent See how the $\delta$ parameter is, not surprising, completely absent from the expressions \ref{eqn:s_i}, \ref{eqn:p_i} and \ref{eqn:w_i}. It is precisely here, and only here, within the definition of $i_{STOP}$, that it comes into play. We close the section by displaying all our recurrence in a single frame:
\begin{equation*}
s_i=\begin{dcases}
			0 & \text{if $i = 0$}\\
            \lambda \log n & \text{if $i = 1$}\\
            \sum_{j=0}^{\lambda \log n} ( s_{i-1} + \lambda \log n - j ) \frac{ 2^{\lambda \log n - j} { n-s_{i-1} \choose \lambda \log n - j }{ s_{i-1}\choose j } }{ \sum_{t=0}^{\lambda \log n} 2^{\lambda \log n - t} {n-s_{i-1} \choose \lambda \log n - t}{ s_{i-1} \choose t }} & \text{otherwise}
		 \end{dcases}
\end{equation*}
\begin{equation*}
p_i=\begin{dcases}
			1 & \text{if $i = 0$}\\
            \frac{-i + \sum_{t=0}^{\lambda \log n} 2^{\lambda \log n - t} {n-s_i \choose \lambda \log n - t}{ s_i \choose t }}{2^{\lambda \log n} {n \choose \lambda \log n} - w_i } & \text{otherwise}
		 \end{dcases}
\end{equation*}
\begin{equation*}
w_i=\begin{dcases}
			0 & \text{if $i = 0$}\\
            w_{i-1} + \frac{1}{p_{i-1}} = \sum_{r=0}^{i-1} \frac{1}{p_r} & \text{otherwise}
		 \end{dcases}
\end{equation*}
\begin{equation*}
i_{STOP} = min \{\ i \ |\ w_i \geq \delta n\ \}
\end{equation*}
 
\section{Proof of sub-exponential running time}\label{sec:proof}
\noindent We will now prove that $i_{STOP} \in o(n)$. In order to do that, we are not going to unfurl the accurate recurrence above so to flat it down into a single, non-recurring expression. Such recurrence will however play an important role in Section \ref{sec:empirical}. Instead, here we will make some simplifications that will render us able to obtain a rough upper bound on the growth rate of $i_{STOP}$. None of such simplifications will jeopardise the validity of our conclusion, though: for they will only worsen the quality of our upper bound, which nevertheless will remain $o(n)$ in the end, so sufficient for our purpose.
\newpage
\begin{theorem}
$i_{STOP} \in o(n)$
\end{theorem}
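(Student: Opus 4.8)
The plan is to split the growth of the randomly built maximal monotone sub-formula $\Psi$ into two regimes and to show that each accounts for only $o(n)$ of its clauses; by the section's ``longer means shorter'' principle it suffices to do this under the standing assumption that every clause has length exactly $k := \lambda\log n$. Recall $i_{STOP} = \min\{i : w_i \ge \delta n\}$ and that $w_i$ is strictly increasing, so it is enough to exhibit a single index $J \in o(n)$ with $w_J \ge \delta n$. The whole argument rests on two deliberately crude estimates — a lower bound on how fast $s_i$ grows while $s_i < n/2$, and an upper bound on $p_i$ once $s_i \ge n/2$ — each of which, exactly as the section's preamble promises, only makes the resulting bound on $i_{STOP}$ worse.

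I would first handle $s_i$. Write the recurrence~\ref{eqn:s_i} as $s_i = s_{i-1} + k - \mathbf{E}_{\pi^{(i)}}[\,j\,]$ and observe that its weight $\pi^{(i)}_j$ is proportional to $2^{-j}q_j$, where $q_j = \binom{s_{i-1}}{j}\binom{n-s_{i-1}}{k-j}\big/\binom{n}{k}$ is the hypergeometric law of the number of already-crystallised variables among the $k$ variables of a fresh clause. Since $j$ and $2^{-j}$ are monotone in opposite directions, $\mathbf{E}_q[\,j\,2^{-j}\,] \le \mathbf{E}_q[\,j\,]\,\mathbf{E}_q[\,2^{-j}\,]$, so $\mathbf{E}_{\pi^{(i)}}[\,j\,] \le \mathbf{E}_q[\,j\,] = k\,s_{i-1}/n$. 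Hence $s_i - s_{i-1} \ge k(1 - s_{i-1}/n) \ge k/2$ as long as $s_{i-1} \le n/2$, and $s_i - s_{i-1} \ge 0$ unconditionally. Therefore $s_i$ is non-decreasing and, telescoping until $s$ exceeds $n/2$, there is an index $T \le \lceil n/k\rceil + 1 = O(n/\log n)$ with $s_i \ge n/2$ for all $i \ge T$; in particular $T \in o(n)$.

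Next I would bound $p_i$ for $i \ge T$. Dropping the two corrections in~\ref{eqn:p_i} just as the preamble does — the $-i$ in the numerator only decreases $p_i$, while $w_i < \delta n$ is swamped by $2^k\binom{n}{k}$ — $p_i$ is the probability that a uniformly random clause of length $k$ is compatible with a fixed monotone sign pattern on $s_i$ variables, i.e. $p_i = \mathbf{E}[\,2^{-M}\,]$, where $M$ is the number of those $s_i$ crystallised variables caught by a uniformly random $k$-subset of all $n$ variables. Revealing the chosen variables one at a time, the $\ell$-th lands in the crystallised set with conditional probability at least $(s_i - k)/n \ge \tfrac12 - o(1) \ge \tfrac13$ for all large $n$ (using $k = \lambda\log n = o(n)$), so $M$ stochastically dominates a $\mathrm{Bin}(k,\tfrac13)$ variable; since $x \mapsto 2^{-x}$ is decreasing, $p_i \le \mathbf{E}[\,2^{-\mathrm{Bin}(k,1/3)}\,] = (5/6)^{k}$. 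As $k = \lambda\log n$ this equals $n^{-c}$ for some constant $c = c(\lambda) > 0$, hence $1/p_i \ge n^{c}$ whenever $i \ge T$.

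Combining the two estimates, for every $J \ge T$ we get $w_J = \sum_{r=0}^{J-1} 1/p_r \ge \sum_{r=T}^{J-1} 1/p_r \ge (J-T)\,n^{c}$, so choosing $J = T + \lceil \delta\,n^{\,1-c}\rceil$ yields $w_J \ge \lceil \delta\,n^{1-c}\rceil\,n^{c} \ge \delta n$, whence $i_{STOP} \le J = O(n/\log n) + \lceil \delta\,n^{1-c}\rceil$. Both terms are $o(n)$ — the second because $n^{-c} \to 0$ for the fixed $c > 0$ — so $i_{STOP} \in o(n)$. The step I expect to be the crux is the upper bound on $p_i$: the naive route of estimating the hypergeometric expectation $\mathbf{E}[\,2^{-M}\,]$ near its mode gives only $p_i \le 1$, and what is genuinely needed is the observation that, once $s_i = \Omega(n)$, that expectation is dominated by an exponentially small binomial quantity; by comparison, the growth estimate for $s_i$ and the discarding of the two correction terms are routine.
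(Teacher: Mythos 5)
Your proof is correct and follows essentially the same two-regime decomposition as the paper's: first show that after $O(n/(\lambda\log n))$ accepted clauses one has $s_i\ge n/2$, then show that from that point on each further acceptance costs at least polynomially many wasted clauses, so only $o(n)$ more fit into the budget of $\delta n$. The only real difference is that you make rigorous the two steps the paper argues heuristically: you derive $s_i-s_{i-1}\ge k(1-s_{i-1}/n)$ from the recurrence \ref{eqn:s_i} via the Chebyshev correlation inequality applied to the reweighted hypergeometric law, and you bound $p_i=\mathbf{E}[2^{-M}]\le(5/6)^{k}=n^{-c}$ by stochastic domination of the hypergeometric overlap by a $\mathrm{Bin}(k,1/3)$ variable, where the paper simply conditions on the "average" overlap $k/2$ to claim $p_i\approx n^{-\lambda/2}$; your constant $c=\lambda\log_2(6/5)$ is weaker than the paper's $\lambda/2$ but, unlike the paper's, is actually proved, and since the $n/(\lambda\log n)$ term dominates in either case the final bound is unaffected.
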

\begin{proof}
Let us start by displaying the following simple drawing:
\begin{figure}[H]
    \centering
    \includegraphics[scale=0.15]{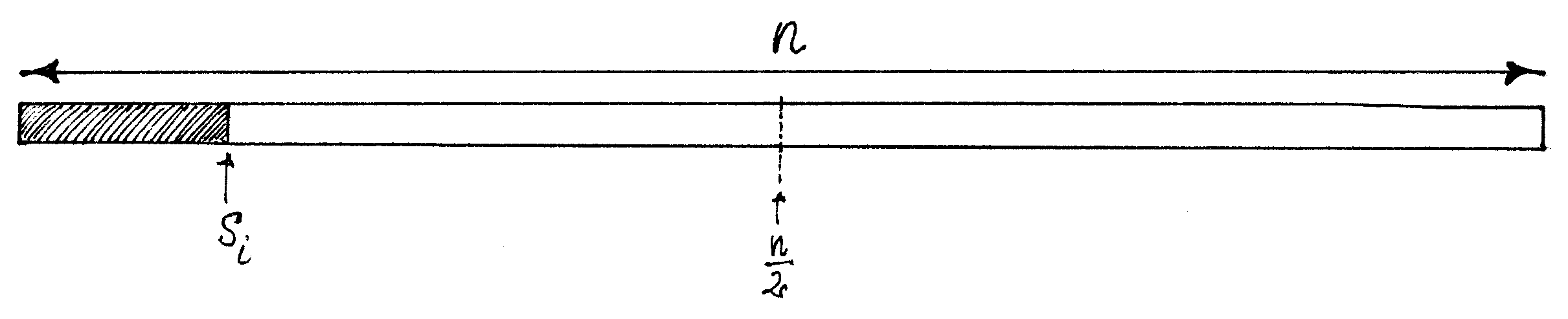} \\
    \caption{$s_i$ growing towards $\frac{n}{2}$ as we keep adding clauses to $\Psi$}
    \label{fig:s_i_growing}
\end{figure}
\noindent We have to imagine to be at any early moment of the process of building $\Psi$: there are $i$ clauses in it, and it has $s_i < \frac{n}{2}$ variables. We observe the obvious fact that, on average, each time we add a compatible clause of $\lambda \log n$ literals to $\Psi$, at least $\frac{\lambda \log n}{2}$ of them will be \textit{new} variables for $\Psi$: basically all those who fall into the "right" half of the $n$ variables, which is still completely empty. Actually such new variables will be more than that, because there will be a part of them, less and less as $s_i$ grows, picked from the "left" half of $n$. This simple fact is due to the random nature of the picked clause: on average half of its literals will fall on the left side, and half on the right side. That is the first simplification we make: see how doing so implies we consider $s_i$ as growing slower than it actually does in reality, thus we are only raising the bar of our upper bound. So, as we are certain to add at least $\frac{\lambda \log n}{2}$ new variables each time we add a new compatible clause to $\Psi$, we can be certain that after $\frac{n}{2}\frac{2}{\lambda \log n} = \frac{n}{\lambda \log n}$ clauses added, we have to have at least $s_i = \frac{n}{2}$ crystalised variables:
\begin{figure}[H]
    \centering
    \includegraphics[scale=0.15]{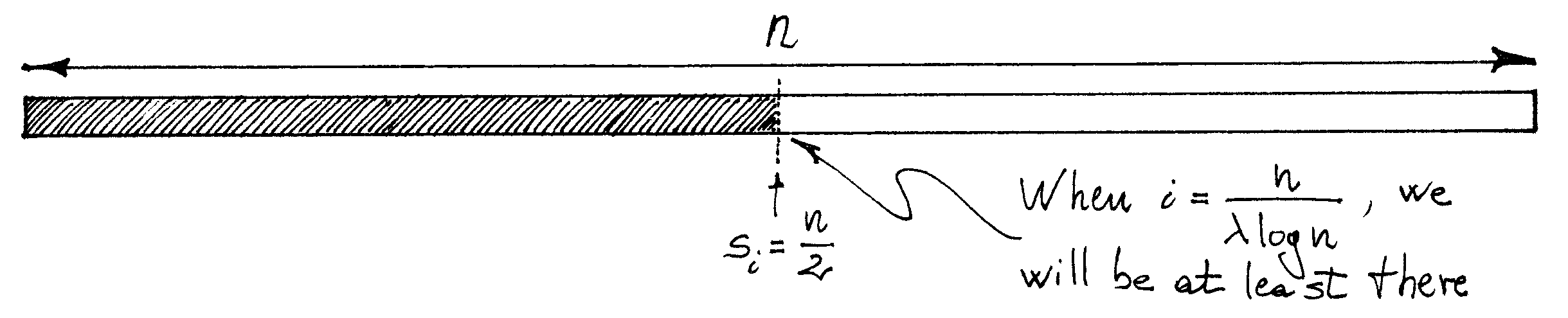} \\
    \caption{When $\Psi$ has $i = \frac{n}{\lambda \log n}$ clauses, it also has at least $s_i = \frac{n}{2}$ variables}
    \label{fig:s_i_n_2}
\end{figure}
\noindent Such basic statement, which is quite obvious on its own, has nevertheless been subject to empirical fact checking: the evidence in support of it is so overwhelming that it leaves no room, if ever has there been any, for believing the opposite. Actually, as we easily predicted, the empirically observed $s_i$ was much higher than $\frac{n}{2}$. Now that we are at this point, let us ask the following question: which is the probability that a clause $c$ randomly picked from $\Phi$ will be compatible with $\Psi$ now that it has $s_i = \frac{n}{2}$ variables? Well also in this case, as usual, on average $\frac{\lambda \log n}{2}$ literals of $c$ will fall on the "right" half of $n$, which is still completely empty, whereas the others $\frac{\lambda \log n}{2}$ will fall on the "left" half, which is now completely filled. The probability that the literals fallen on the left side will be \textit{all} compatible with the $s_i$ crystalised variables is:
\begin{equation*}
\frac{1}{2^{\frac{{\lambda log n}}{2}}} = \frac{1}{\sqrt{n^\lambda}}
\end{equation*}
This means that, once $s_i = \frac{n}{2}$ is reached, from there onwards each clause we will add to $\Psi$ will require us to waste at least $\sqrt{n^\lambda}$ clauses from $\Phi$. Actually, and here is the second simplification we make, such wasting will grow higher and higher as more and more clauses will be added to $\Psi$, but we will pretend that such wasting remains fixed at $\sqrt{n^\lambda}$: again, this is not going to be an issue for our line of reasoning, as it will only raise the bar of the upper bound we are computing, worsening its quality (for in our simplified analysis the clauses of $\Phi$ will be all dilapidated after it actually happens in reality).
\newpage
\noindent At this point, we only have to determine how many further clauses can we add to $\Psi$ before we run out of original clauses to be picked from $\Phi$. We start from the $i = \frac{n}{\lambda \log n}$ clauses which allowed us to reach $s_i = \frac{n}{2}$: let us pretend that each of these clauses costed us the wasting of only $1$ clause from $\Phi$, that is to say the minimum waste possible. This is obviously a gross underestimation of reality, and it is our third simplification: it will make us pretend to have much more remaining clauses of $\Phi$ at our disposal than there actually are (and this will raise even more the bar of our upper bound). So we have 
\begin{equation*}
\delta n - \frac{n}{\lambda \log n}
\end{equation*}
remaining clauses in $\Phi$, from where to pick our candidate clauses to be potentially added into $\Psi$. Considering that we are forced to waste at least $\sqrt{n^\lambda}$ of them \textit{for each} new clause that we want to add to $\Psi$, it follows that the number of further clauses that we might hope to add to $\Psi$ is at most
\begin{equation*}
\frac{\delta n - \frac{n}{\lambda \log n}}{\sqrt{n^\lambda}}
\end{equation*}
before we exhaust all the $\delta n$ clauses at our disposal in $\Phi$. By adding to such number the  $\frac{n}{\lambda \log n}$ clauses initially added to $\Psi$ that allowed us to reach $s_i = \frac{n}{2}$, we get
\begin{equation}\label{eq:upper}
i_{STOP} \leq \frac{n}{\lambda \log n}(1-\frac{1}{\sqrt{n^\lambda}}) + \frac{\delta n}{\sqrt{n^\lambda}}
\end{equation} 
from which it immediately follows that $i_{STOP} \in o(n)$.
\end{proof}
\noindent Let us now look back at the pseudocode of our algorithm: we have to merely enumerate all the existing monotone sub-formulae of $\Phi$, which boils down to enumerate all the possible subsets of clauses of $\Phi$ having cardinality at most $i_{STOP}$. How many such subsets are there? Considering that $i_{STOP} \in O( \frac{n}{\lambda \log n} )$, their number is upper bounded by how many ways are there to pick $\frac{n}{\lambda \log n}$ clauses from the initial ensemble of $\delta n$ clauses. We are ready to conclude by proving the following:
\begin{theorem}
The procedure $\textsc{Count}(\Phi)$ runs in sub-exponential time in the size of $\Phi$.
\end{theorem}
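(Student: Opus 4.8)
\noindent The plan is to split the running time of $\textsc{Count}(\Phi)$ into three contributions and bound each separately: (i) the number of monotone sub-formulae swept by the loop on lines 3--9; (ii) the work done per swept sub-formula; and (iii) the work done by the closing loop on lines 10--14, together with the one-time cost of producing $i_{STOP}$. Only (i) is not manifestly polynomial, so the crux is to show it is $2^{o(n)}$; once that is in hand, (ii) and (iii) are mere multiplicative (resp.\ additive) polynomial overhead, and the claim follows.

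First I would bound the search space. By the analysis of Section~\ref{sec:intuition}, every monotone sub-formula of $\Phi$ has at most $i_{STOP}$ clauses, so the loop on lines 3--9 ranges over at most
\[
\sum_{i=0}^{i_{STOP}}\binom{\delta n}{i}\ \le\ (i_{STOP}+1)\binom{\delta n}{i_{STOP}},
\]
where the inequality is valid because $i_{STOP}\in o(n)$ keeps $i_{STOP}$ below $\delta n/2$ for all large $n$, a range on which $\binom{\delta n}{i}$ is non-decreasing in $i$. Writing $i_{STOP}=n/f(n)$ with $f(n):=n/i_{STOP}\to\infty$, the elementary estimate $\binom{N}{k}\le(eN/k)^{k}$ yields $\binom{\delta n}{i_{STOP}}\le(e\delta n/i_{STOP})^{i_{STOP}}=(e\delta f(n))^{n/f(n)}$, whose binary logarithm is $\tfrac{n}{f(n)}\log_2(e\delta f(n))=o(n)$ since $\log f(n)/f(n)\to 0$. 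Hence the number of swept sub-formulae is $2^{o(n)}$; plugging in the sharper $i_{STOP}=O(n/\log n)$ coming from inequality~\ref{eq:upper} makes the exponent explicit, namely $2^{O(n\log\log n/\log n)}$.

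Next I would dispose of the polynomial pieces. The relevant clause-subsets (those of cardinality at most $i_{STOP}$) can be listed spending only $\mathrm{poly}(n)$ time per subset, and for each one, reading off its variable count $\nu$ and the parity of its clause count and updating the triple $\langle\nu,O_\nu,E_\nu\rangle$ costs $\mathrm{poly}(n)$. The closing loop runs $n-\lambda\log n+1\le n$ times, each iteration a shift-and-add on integers of bit-length $O(n)$: the running value of $count$ never exceeds $2^{n+o(n)}$ in absolute value, being a sum of at most $n$ terms each bounded by $2^{o(n)}\cdot 2^{n-1}$. Finally $i_{STOP}$ itself is obtained from the recurrence of Section~\ref{sec:intuition} in $\mathrm{poly}(n)$ time, since each $s_i,p_i,w_i$ is a sum of $O(\log n)$ binomial terms and only $i\le i_{STOP}=O(n/\log n)$ of them are needed. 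Multiplying (i)$\times$(ii) and adding (iii) gives a total of $2^{o(n)}\cdot\mathrm{poly}(n)+\mathrm{poly}(n)=2^{o(n)}$; and since $|\Phi|=\Theta(\delta n\cdot\lambda\log n)$ is polynomially sandwiched by $n$ — in particular $n\le|\Phi|$, so any $o(n)$ is also $o(|\Phi|)$ — a $2^{o(n)}$ bound is $2^{o(|\Phi|)}$, i.e.\ sub-exponential in the size of $\Phi$.

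The single load-bearing ingredient is the already-proven $i_{STOP}\in o(n)$: granted it, the binomial computation $\binom{\delta n}{o(n)}=2^{o(n)}$ is routine and robust (it does not even need the stronger $n/\log n$ rate). The only points that deserve a little care — none of them a real obstacle — are the bookkeeping claims: that the small clause-subsets can be enumerated with $\mathrm{poly}(n)$ work each, that the arithmetic on lines 10--14 stays polynomial because all partial sums have $O(n)$-bit magnitude, and the translation between ``sub-exponential in $n$'' and ``sub-exponential in $|\Phi|$'' via the polynomial relation between the two.
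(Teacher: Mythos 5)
Your proof is correct and takes essentially the same route as the paper: both reduce the claim to showing $\binom{\delta n}{O(n/(\lambda\log n))}\in 2^{o(n)}$ and extract the same leading exponent of order $\frac{n}{\lambda\log n}\log(\delta\lambda\log n)$. The only differences are cosmetic — you use the elementary bound $\binom{N}{k}\le(eN/k)^k$ where the paper invokes Stirling's formula, and you spell out the polynomial bookkeeping (per-subset work, bit-lengths in the final loop, the passage from $n$ to $|\Phi|$) that the paper leaves implicit.
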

\begin{proof}
We only have to show the truth of the following intuitively basic fact\footnote{This short proof was suggested by Qiaochu Yuan, \url{https://math.stackexchange.com/q/3829594/10194}}:
$${\delta n \choose \frac{n}{\lambda \log n}} \in 2^{o(n)}$$
Stirling's formula tells us that, asymptotically, the following holds:
$$\log { a \choose b } \approx b \log \frac{a}{b} + (a-b) \log \frac{a}{a - b}$$
By plugging into it $a = \delta n$ and $b = \frac{n}{\lambda \log n}$, we obtain
$$\log { \delta n \choose \frac{n}{\lambda \log n} } \approx \frac{n}{\lambda \log n} \log ( \delta \lambda \log n ) - ( \delta n - \frac{n}{\lambda \log n} ) \log ( 1 - \frac{1}{\delta \lambda \log n} )$$
As $\log( 1 - x ) \approx -x$ for small $x$, we can use it on the rightmost term to obtain:
$$\log { \delta n \choose \frac{n}{\lambda \log n} } \approx \frac{n}{\lambda \log n} \log ( \delta \lambda \log n ) + \frac{n}{\lambda \log n} - \frac{n}{\delta \lambda^2 \log^2 n}$$
where the leftmost term can be easily seen to be the leading term. The asymptotic running time of the $\textsc{Count}(\Phi)$ procedure is therefore the following:
\[\Scale[1.75]{2^\frac{n \log ( \delta \lambda \log n )}{ \lambda \log n} \in 2^{o(n)}}\]
\end{proof}

\restoregeometry
\newgeometry{bottom=0.5in}

\section{Empirical evidence in further strengthening of the theoretical evidence}\label{sec:empirical}
\noindent The overall objective of our empirical investigation is visually described by the soldering of the following $2$ complementary faces of the same medal, which are going to be addressed in the next $2$ subsections:
\begin{itemize}
    \item  To increase confidence that \ref{eq:upper} truly is an upper bound on $i_{STOP}$.
    \item  To increase confidence that $i_{STOP}$ truly is an accurate representation of reality.
 \end{itemize}
Aim is to combine such $2$ facts in order to reach the desired conclusion: \ref{eq:upper} is a valid upper bound of reality.

\subsection{Empirical correctness of sub-linear upper bound} In order to empirically ascertain that $i_{STOP}$ growth rate is upper bounded by \ref{eq:upper}, we have written the software implementation of the recurrence described in section \ref{sec:intuition}. We wrote a function \texttt{predict}( $n$, $\delta$, $\lambda$ ) which returns $i_{STOP}$ by unfurling our recurrence, from $i = 0$ up to the first $i$ such that $w_i \geq \delta n$, which is the returned value. We invoked such function up to more than $4$ billions variables, $n = 2^{32} +1$ to be precise. We tested several combinations of the $\delta$ and $\lambda$ parameters, roughly falling into $3$ distinct categories: fixed $\lambda = 1$ and growing $\delta$, fixed $\delta = 1$ and growing $\lambda$, plus finally the $4$ combinations induced by $\delta \in \{2, 2048\}$ and $\lambda \in \{2,4\}$. The results are illustrated in the following $3$ figures. Needless to say, these collected data leave no hope, if ever has there been any, for believing that \ref{eq:upper} is not a valid upper bound on $i_{STOP}$.
\begin{figure}[H]
    \centering
    \makebox[\textwidth][c]{\includegraphics[width=1.35\textwidth]{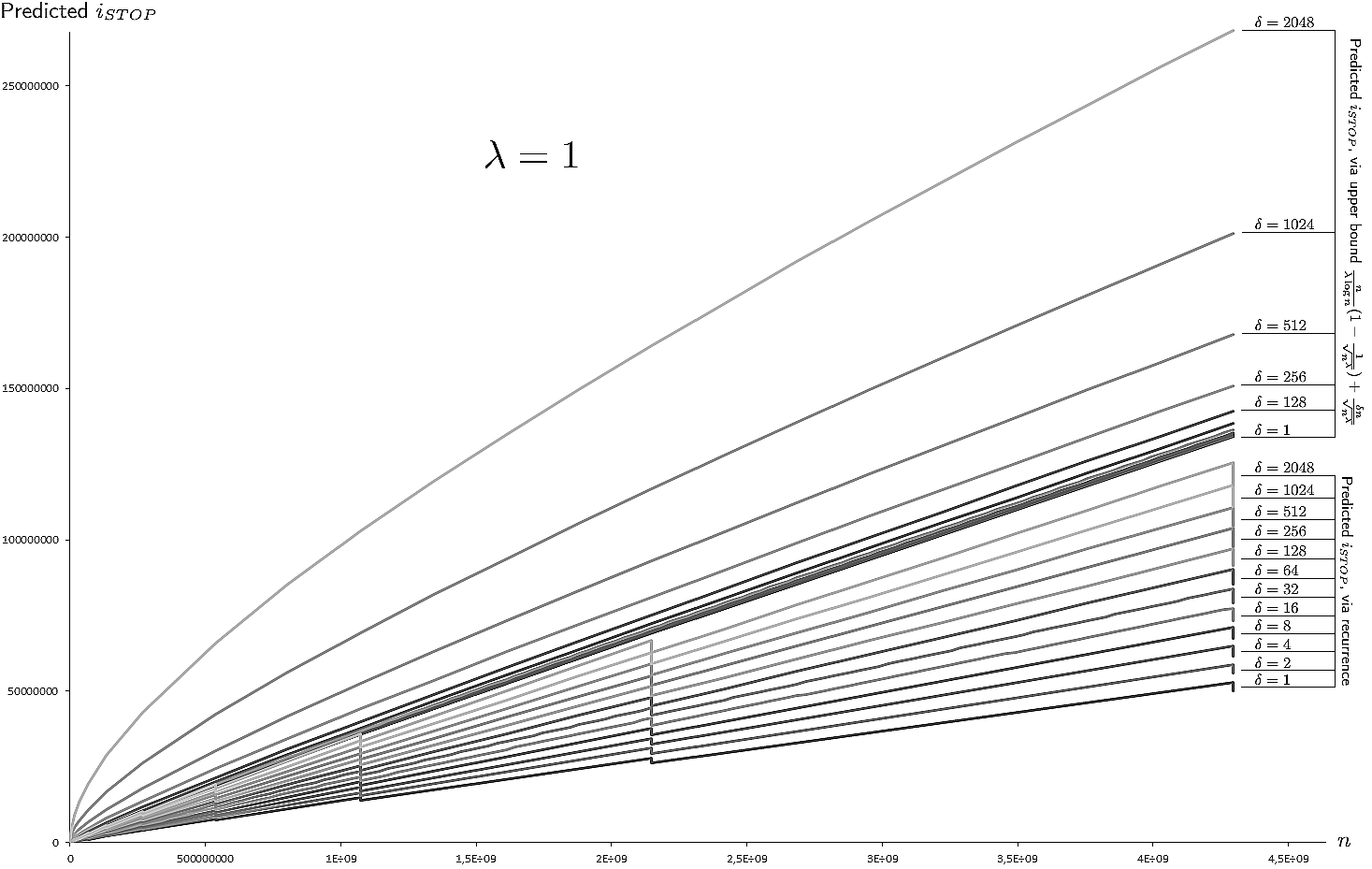}}
    \caption{Comparison between $i_{STOP}$ predictions, for fixed $\lambda = 1$ and growing $\delta$}
    \label{fig:recurrence_upper_bound}
\end{figure}
\newpage 
\begin{figure}[H]
    \centering
    \makebox[\textwidth][c]{\includegraphics[width=1.29\textwidth]{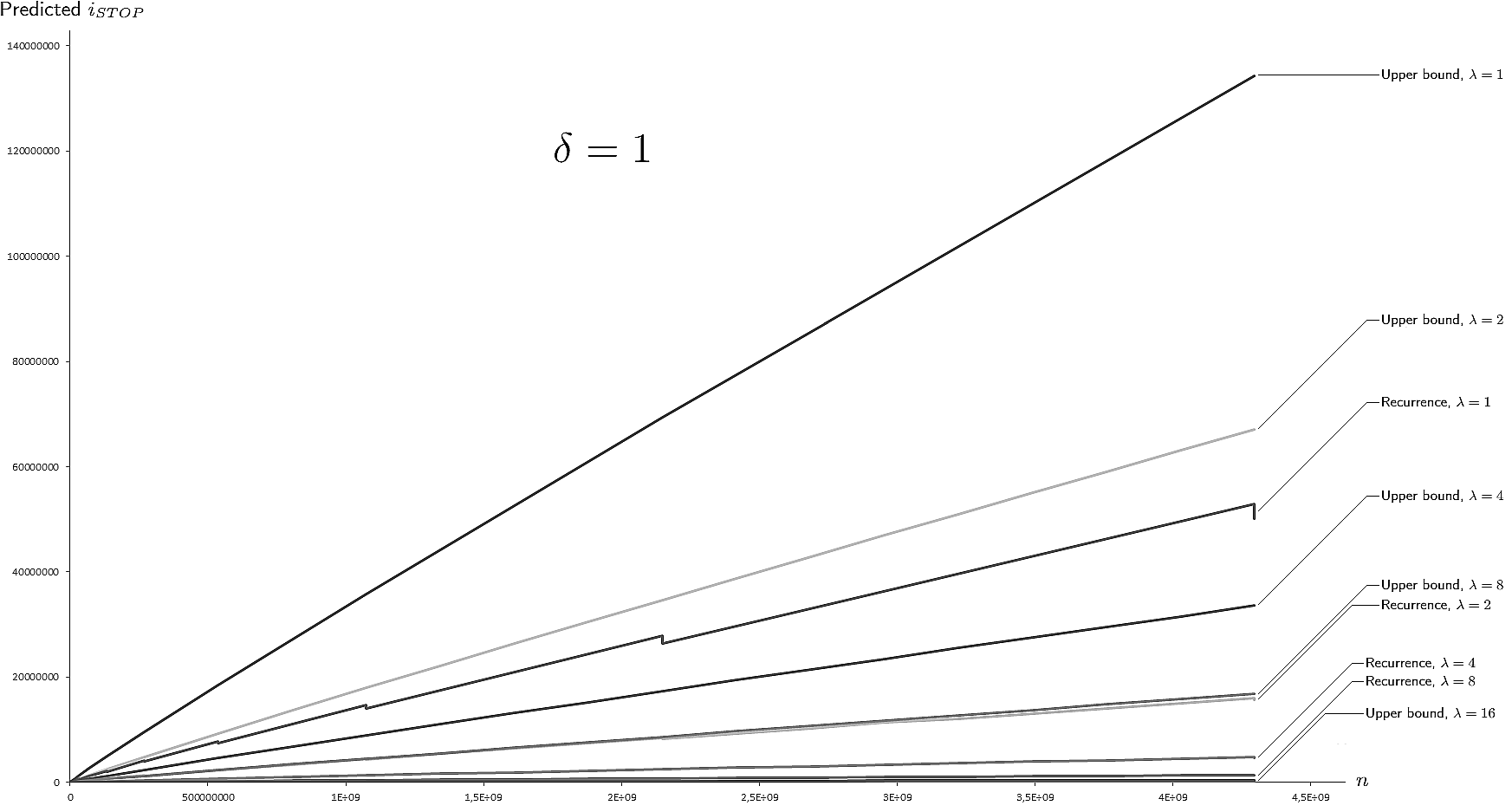}}
    \caption{Comparison between $i_{STOP}$ predictions, for fixed $\delta = 1$ and growing $\lambda$}
    \label{fig:recurrence_upper_bound_2}
\end{figure}

\begin{figure}[H]
    \centering
    \makebox[\textwidth][c]{\includegraphics[width=1.285\textwidth]{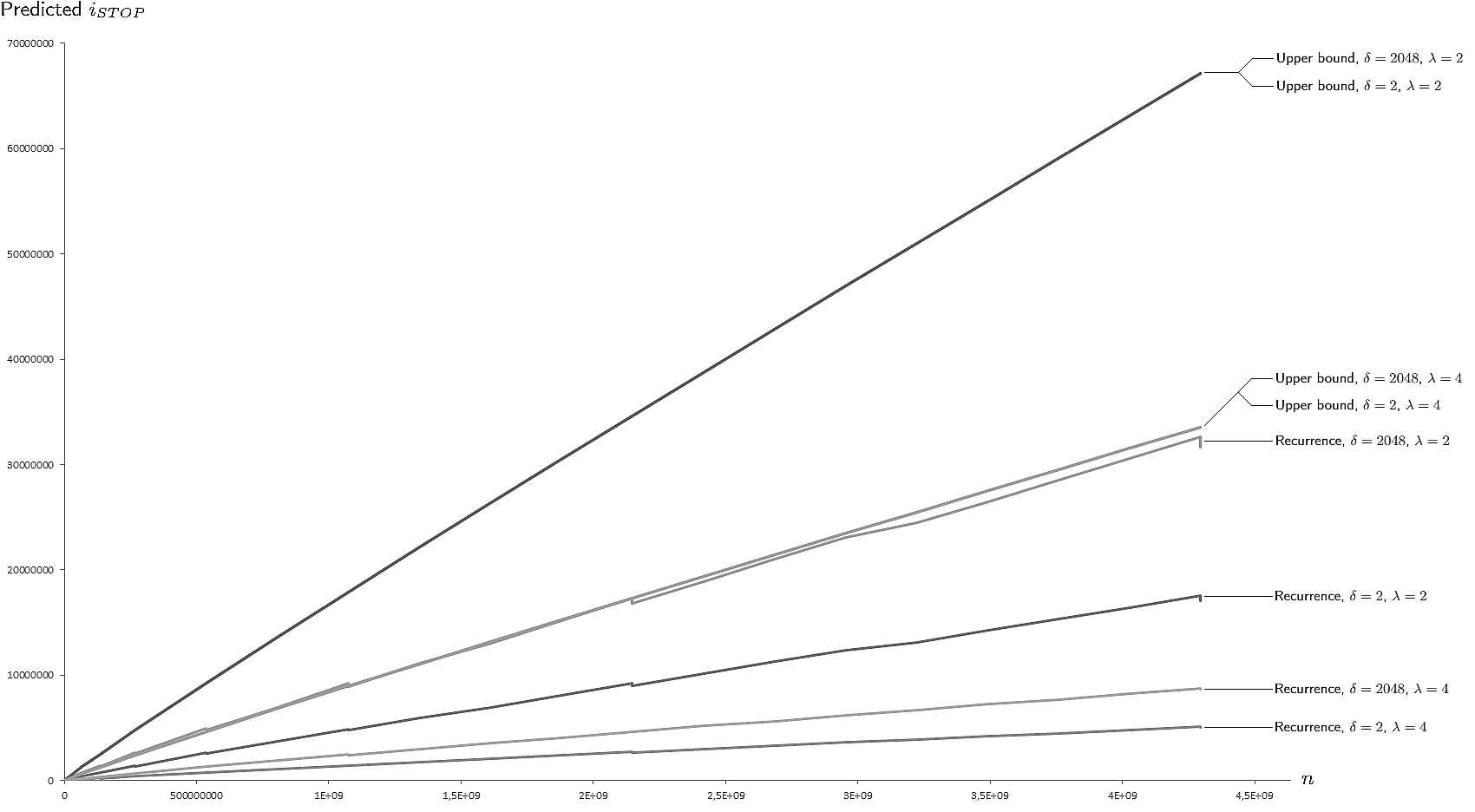}}
    \caption{Comparison between $i_{STOP}$ predictions, for $\delta \in \{2,2048\}$ and $\lambda \in \{2,4\}$}
    \label{fig:recurrence_upper_bound_3}
\end{figure}

\subsection{Empirical correctness of equations \ref{eqn:s_i}, \ref{eqn:p_i} and \ref{eqn:w_i}} We begin directly by showing the figures:

\begin{figure}[H]
    \centering
    \makebox[\textwidth][c]{\includegraphics[width=1.35\textwidth]{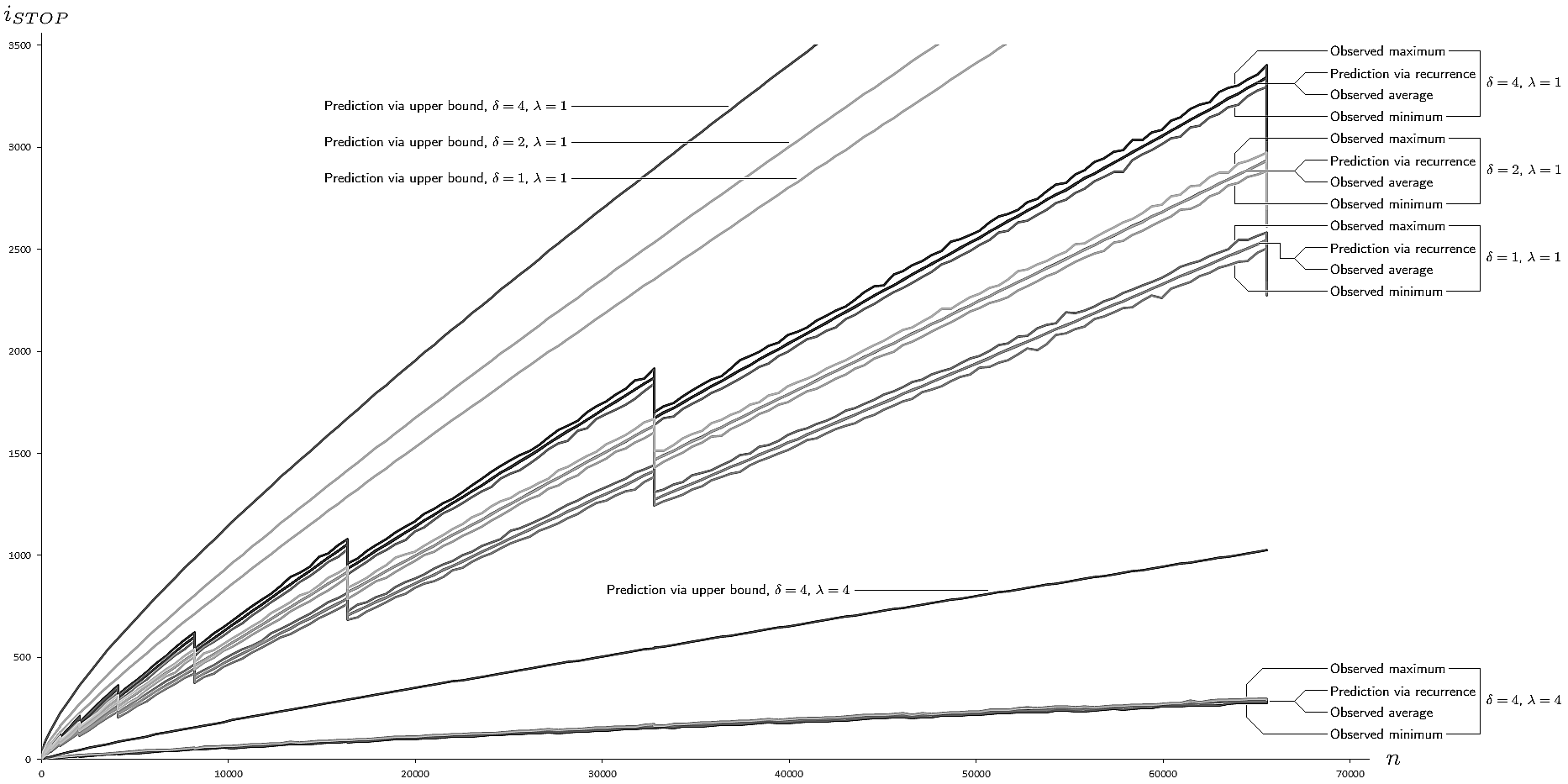}}
    \caption{Fact checking $i_{STOP}$ prediction accuracy against reality, for $n \leq 2^{16} + 1$}
    \label{fig:reality}
\end{figure}

\begin{figure}[H]
    \centering
    \makebox[\textwidth][c]{\includegraphics[width=1.36\textwidth]{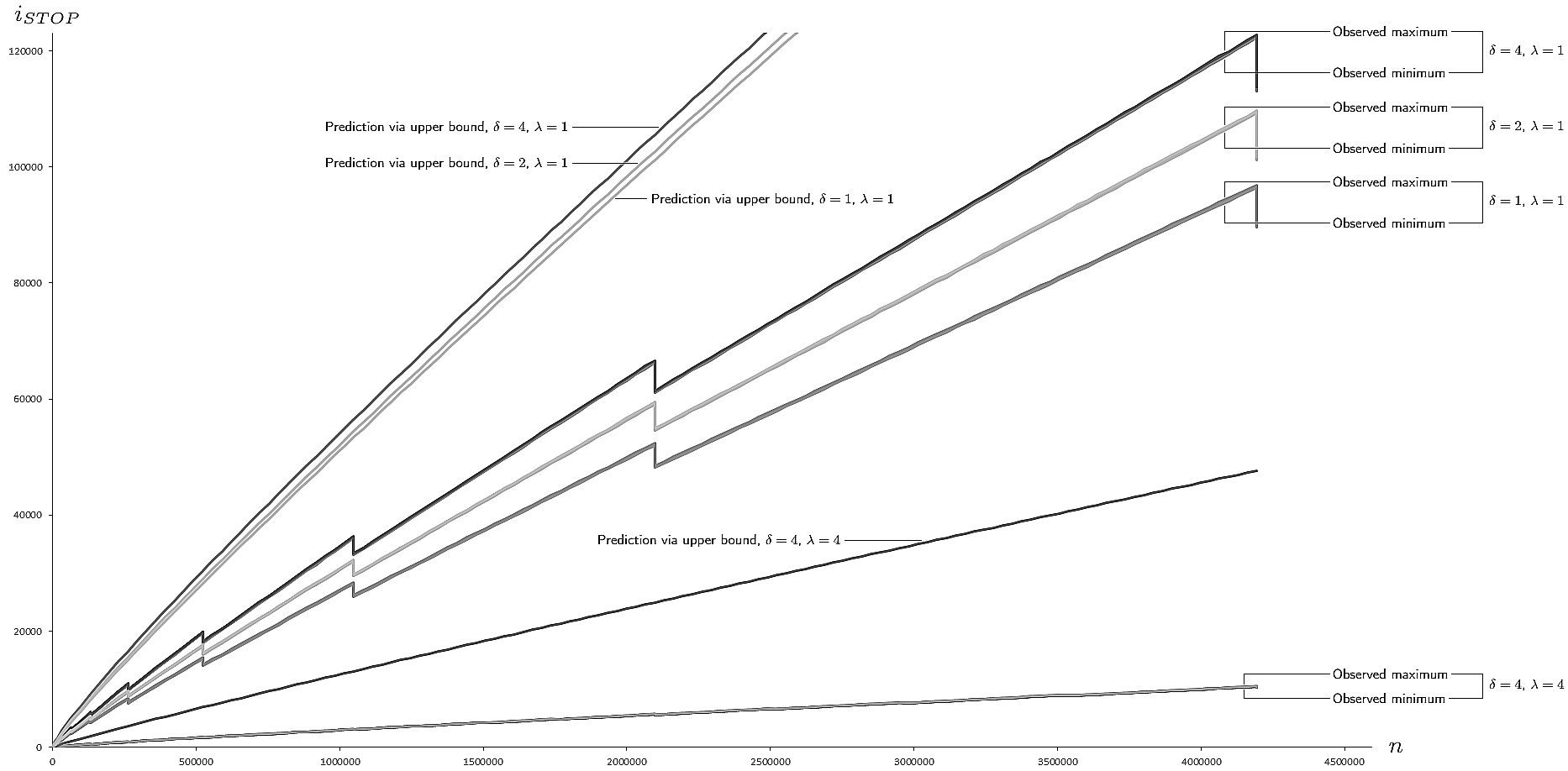}}
    \caption{Fact checking $i_{STOP}$ prediction accuracy against reality, for $2^{16}-1 \leq n \leq 2^{22} + 1$}
    \label{fig:reality_4m}
\end{figure}
\newpage
\noindent In order to empirically ascertain that the formalization of $i_{STOP}$ given in section \ref{sec:intuition} is an accurate representation of true, factual reality, we have written the software implementation of the process we have described therein. We wrote a function \texttt{psi}( $n$, $\delta$, $\lambda$ ) which returns the number $i$ of clauses of a randomly picked maximal monotone sub-formula $\Psi$ of a randomly generated formula $\Phi$ having $n$ variables, $\delta n$ clauses, and $\Lambda_\downarrow = \Lambda_\uparrow = \lambda \log n$. We invoked such function up to more than $4$ millions variables, $n = 2^{22}+1$. For each combination of $n$, $\delta$ and $\lambda$ which is shown in the figures, the \texttt{psi} function was invoked $100$ times. The empirically observed degree of accuracy of the $i_{STOP}$ value predicted by the recurrence goes beyond our most optimistic expectations, as the predicted average and the observed average can be considered coincident from any practical standpoint: see how, in figure \ref{fig:reality}, their plots are indistinguishable one another. Observe also how in figure \ref{fig:reality_4m}, where the number of variables is remarkably higher than in figure \ref{fig:reality}, the observed minimum and maximum values become kind of coincident as well, and practically no longer distinguishable, sandwiching the predicted and observed averages between them. Conclusion is, needless to say, that these observed data leave us with not even a minuscule bit of possibility, if ever has there been any, for believing that our formalization of $i_{STOP}$ is not a trustworthy description of reality. 

\subsection{Availability of data and software implementation}
All the empirically gathered data are available for download, as a spreadsheet, at the following web address: \url{http://gcamerani.altervista.org/subexp}. Within the same web page, it will be possible to download the Java software that we implemented\footnote{Our code infringes almost every mandate of object-oriented programming and good software design. We \textit{count} on your comprehension: we had to write it quickly, and it had to be fast.} to unroll our experimental investigation, i.e. the \texttt{predict} and \texttt{psi} functions. We also implemented the algorithm itself, in order to empirically check the validity of equation \ref{eq:property} and to compare it to other model counters\footnote{To get a glimpse of what are we talking about: when presented a certain $\Phi$ with $n = 68$, $\delta = 1$ and $\lambda = 1.2$, our algorithm answered $|\mathcal{S}| = 226243273496345990400$ in $0.51$ seconds and $|\Phi|$ memory footprint, whereas \texttt{c2d} delivered the same answer after $2036$ seconds and using more than $800$ Mb of memory. With $n = 70$, ours delivered $|\mathcal{S}| = 897869659263845943936$ in $0.842$ seconds and $|\Phi|$ memory (which is always the case of course), while \texttt{c2d} finished in $4$ hours, $2$ minutes and $24$ seconds, with around $1.5$ Gb memory usage.}.

\section{Final considerations and possible avenues of future research}
\noindent We have presented a new paradigm which allows us to count satisfying assignments without even searching for them, let alone finding them. We have shown that, when applied to random sparse formulae, such paradigm leads us to the following unavoidable conclusion: the counting algorithm runs faster and faster as the minimum clause length gets higher and higher. The same holds for the counting certificate of course, which shrinks and shrinks: for the counting certificate itself is just the whole sub-exponentially sized space of monotone sub-formulae, which cardinality decreases as the clause length increases. We have seen how, when the clause length is at least logarithmic in $n$, this leads to a sub-exponential time algorithm. 
\\\\
Let us thus perform a further, quick final reflection along this line. We observe how the reasoning described in section \ref{sec:proof} would seamlessly go through whatever your favourite minimum clause length $\Lambda_\downarrow$ is. See how this implies that, when $\Lambda_\downarrow = \frac{n}{\mu}$ where $\mu \geq 1$ is a constant, the algorithm runs in asymptotic polynomial time $\Lambda_\uparrow (\delta n)^{\mu}$, as long as $\delta$ is polynomial in $n$. Whereas on the opposite side, when $\Lambda_\downarrow = k$ where $k > 1$ is a constant, the algorithm runs in exponential time $O(2^{\varepsilon n})$, where $\varepsilon = \mu \log \frac{\delta}{\mu}$ and $\mu = \frac{1}{k} + \frac{\delta - \frac{1}{k}}{\sqrt{2^k}}$, which essentially means $\lim_{k \to \infty} \varepsilon = 0$ as long as $\delta \in 2^{o(k)}$.
\\\\
We just stop for a moment, make a step back and look at this again: the sub-exponential running time proof presented in section \ref{sec:proof} would effortlessly work as well, without any modification, for any non-constant $\Lambda_\downarrow$. This means that the $\Lambda_\downarrow = \lambda \log n$ assumption which we stated since line $3$ of the introduction, and around which this entire paper has been forged, is just an arbitrary choice among the infinitely many non-constant functions $\Lambda_\downarrow$. Any monotonically increasing function $\Lambda_\downarrow \in \omega(1)$, no matter how ridicolously slowly growing it might be, would do the job: in the end, the running time we are going to obtain will always be $O(2^{\varepsilon n})$, where $\varepsilon = \frac{\log \Lambda_\downarrow}{\Lambda_\downarrow}$. All these last observations lead us to ponder about ETH, which we are going to rapidly address in the following subsection.

\subsection{Impact on ETH} The Exponential Time Hypothesis, originally formulated by Impagliazzo and Paturi in \cite{ETH}, postulates that $k$-SAT cannot be solved in worst case sub-exponential time for every $k \geq 3$. When dealing with ETH, these $2$ assumptions are normally made: $k$ is a constant, and the number $m$ of clauses is linear in the number $n$ of variables. Such second assumption can be made due the existence of the Sparsification Lemma introduced in \cite{sparsification}, thanks to which any $k$-SAT instance having $m \in \omega(n)$ can be disintegrated, in sub-exponential time, in sub-exponentially many sparse $k$-SAT instances such that the original dense formula is satisfiable if and only if at least one of the sparse formulae is. So ETH belongs to the constant $k$ realm, whereas our algorithm counts satisfying assignments in sub-exponential time only in the non-constant $k$ realm; moreover ETH does not explicitly require that the instance is random, whereas for our algorithm to work this is a crucial assumption; finally our algorithm counts the exact number of satisfying assignments, whereas ETH is about existence of at least one satisfying assignment (we observe the presence of \#ETH, the counting equivalent of ETH, introduced in \cite{CountETH}). There are quite some differences, that one may argue whether ETH and our algorithm have anything to say each other at all. As the reader might have guessed, it turns out that there is something indeed. Importantly, if ETH is true then, as proven in \cite{ETH}, the coefficient of the exponent in the running time, $s_k$ to adopt the notation therein, strictly increases infinitely often. That is to say, assuming ETH the problem gets harder and harder as $k$ grows. But this paper has just shown that actually this is \textit{not always} the case. Let us visualize it more clearly:
\begin{figure}[H]
    \centering
    \makebox[\textwidth][c]{\includegraphics[width=0.8\textwidth]{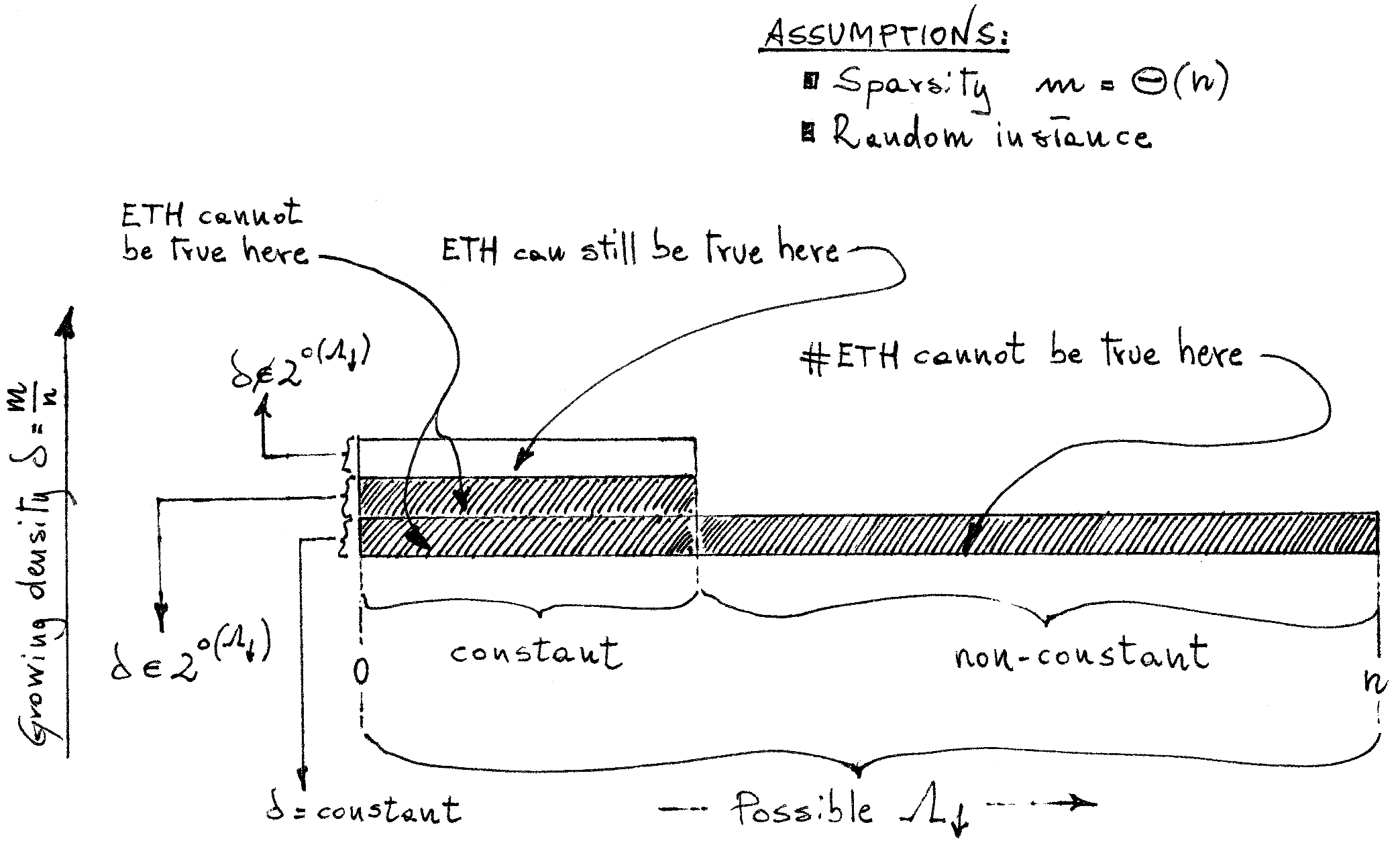}}
    \caption{What the existence of our algorithm has to say in relation to ETH and \#ETH}
    \label{fig:ETH}
\end{figure}
\noindent We begin by observing how, in the non-constant $k$ realm, the \#ETH cannot be true, by the very fact that our algorithm exists. See how such realm is uninteresting in terms of the ETH point of view, because for the instances over there, a satisfying assignment can be easily found. While in the constant $k$ side, $2$ paragraphs above we observed how, as long as $\delta \in 2^{o(k)}$, the problem gets easier and easier as $k$ grows, and our algorithm runs in time $O(2^{\varepsilon n})$ with $\varepsilon$ approaching $0$ as $k$ approaches $\infty$. This means that, in the sub-region where the density $\delta = \frac{m}{n}$ is sub-exponential in $k$, the ETH is false (as well as the \#ETH): for its consequences would disobey reality, as here the sequence of $s_k$ is decreasing rather than increasing. While in the region where the density is at least exponential in $k$, the ETH can be possibly true: see how this is consistent with current knowledge, according to which random instances having density exponential in $k$ are the hardest ones existing. We conclude by observing how, in order to believe ETH now, we must believe that as $k$ grows, $k$-SAT gets easier and easier as long as the density is sub-exponential in $k$, and the $s_k$ sequence keeps on decreasing monotonically, then, as soon as we tune the density higher and step into the sub-region exponential in $k$, keeping $k$ growing and running the \textit{best} existing algorithm therein, such trend is surprisingly inverted all of a sudden for some mysterious reason: a $180$ degrees turn occurs and the sequence commences to inexorably increase toward $1$.  

\subsection{Possible directions of future research}
\noindent We conclude the paper by suggesting $3$ promising avenues to be investigated in the very near future. Firstly, we would suggest to carefully analyse the asymptotic behaviour of the recurrence we presented in section \ref{sec:intuition}, in order to determine a more precise upper bound than ours. 
Why we believe this is such an important and urgent task? Because if $i_{STOP}$ can be proven to belong to $o(\frac{n}{\log n})$, then another more general algorithm would immediately follow. What do we mean by "more general"? We mean a sub-exponential time algorithm which doesn't require any restriction on the minimum clause length, nor the random restriction.
\\\\
Secondly, in the same spirit but with a more pragmatic approach, we would advice to try to shape an algorithm for random $3$-SAT based on the algorithm described here: the basic idea would be to carefully resolve several variables, by applying resolution steps to the input formula, until the minimum clause length becomes long enough (whatever that means, the longer the better) for this algorithm to be invoked with enough profit. Clearly, this process would blow-up the number of clauses, but if an adequate equilibrium between proliferation of clauses and clause elongation can be found, then maybe that would result into a new algorithm with an interesting running time: in the most pessimistic scenario, the resulting running time might be no less than exponential, still possibly with a nice $\varepsilon$ compared to the state of the art; in the most optimistic scenario, if the clause length can be pushed to linear, i.e. $\frac{n}{\mu}$ for some constant $\mu > 1$, by keeping the number of clauses polynomially bounded, that would result in a polynomial time algorithm (such a taming down of the number of clauses should not be dismantled as an absurdity without prior empirical scrutiny, because as the number of clauses grows together with their length, the probability to obtain tautological clauses during resolution steps gets higher and higher, thus lots of them would be overthrown).
\\\\
Thirdly, the more promising avenue, we would be interested to empirically study the space of monotone sub-formulae of random formulae, in order to acquire deeper understanding of how the members of such space are distributed along the summation of equation \ref{eq:rearrangement}, and of how precisely they affect it. See how such summation can be understood as a finer and finer approximation of $|\mathcal{U}|$ as the index $\nu$ grows, until it reaches $\nu = n$ at which point the returned value becomes the exact value of $|\mathcal{U}|$. What if we cut the summation at some index $\nu < n$? Think about this: the value of $|\mathcal{U}|$ requires $n+1$ bits to be represented, where the $n+1$-th bit (the most significant) is the \textit{unsatisfiability} bit. Such bit is $1$ if and only if the formula is unsatisfiable. Intuitively, it sounds conceivable that, in random instances (where the absolute value of the quantity $O_\nu - E_\nu$ should remain small, whatever that means), the $n+1$-th bit gets frozen at some index $\nu < n$, and never changes anymore at higher indexes. That is to say, the value of $O_\nu - E_\nu$ for higher indexes would not have a sufficient magnitude to affect any longer the most significant bits already computed up to that point, and its bits would kind of cancel out with the corresponding bits of opposite sign belonging to others $O_\nu - E_\nu$ values at neighbouring indexes. If this is indeed the case, then we would be able to determine satisfiability by unrolling the summation only up to the index at which the $n+1$-th bit gets frozen, avoiding to compute the subsequent terms. If such frozening index is constant, that would mean polynomial time. And see also how the $n$-th bit of $|\mathcal{U}|$ is the negation of the \textit{majority} bit: it is equal to $0$ if and only if $|\mathcal{S}| > \frac{1}{2}2^n$. Just speculating with intuition...

\printbibliography

\end{document}